\newtheorem{theorem}{Theorem}[section] 
\newtheorem{remark}{Remark}[section]
\newcommand{\pd}[1]{\partial_{#1}}
\newcommand{\tr}{\operatorname{tr}}
\newcommand{\Ra}{\Rightarrow}
\newcommand{\I}{\mathrm{i}}
\newcommand{\YM}{Yang--Mills}
\newcommand{\VEV}[1]{\left\langle #1 \right\rangle}
\newcommand{\ff}[1]{\frac{\delta}{\delta #1}}
\newcommand{\fbyf}[2]{\frac{\delta #1}{\delta #2}}
\newcommand{\R}{\mathbb{R}}
\newcommand{\C}{\mathbb{C}}
\newcommand{\cL}{\mathcal{L}} 
\newcommand{\D}{\mathcal{D}} 
\def\rf#1{(\ref{#1})}
\let\oldexp\exp
\renewcommand{\exp}[1]{\oldexp\left(#1\right)}
\newenvironment{smalleq}{%
 \begingroup
 \small
 \addtolength{\jot}{.2em}%
 \setlength{\arraycolsep}{1.2pt}%
}{%
 \endgroup
}
\def\br{\nonumber\\}
\def\lrb#1{\left(#1\right)}
\def\oh{\frac{1}{2}}
\begin{document}

\begin{frontmatter}

\title{Geometric QCD I: \\
The Hodge-Dual Surface and Quark Confinement}

\author[aff1]{Alexander Migdal\corref{cor1}}
\ead{amigdal@ias.edu}

\cortext[cor1]{Corresponding author}
\affiliation[aff1]{organization={Institute for Advanced Study}, city={Princeton}, country={USA}}

\begin{abstract}
This is the first of two papers presenting a geometric framework for Planar QCD ($N_c \to \infty$). In this part, we establish the kinematic foundation of the theory by constructing the unique stable vacuum of the loop equation. We demonstrate that the Makeenko-Migdal loop equation admits a solution of the form $W[C] = W_{pert}[C] \exp{-\kappa S[C]}$, provided $S[C]$ is a specific minimal surface possessing a self-dual area derivative. We prove that such a surface exists and corresponds to the Hodge-dual projection of a minimal surface in $\mathbb{R}^3 \otimes \mathbb{R}^4$. Crucially, this confinement mechanism relies on the self-duality of the area derivative---a property that exists exclusively in four dimensions. This geometric constraint ensures stability only in $D=4$, distinguishing the resulting theory from standard string models which require higher critical dimensions. We relate the string tension parameter $\kappa$ to the gluon condensate via the Operator Product Expansion. The dynamical quantization of the Fermi string on this rigid surface and the resulting meson spectrum are derived in the companion paper \cite{Migdal2026GeometricQCDII}.
\end{abstract}

\end{frontmatter}
\section{Introduction}

In this series of two papers, we propose a constructive approach to the long-standing problem of the string dual of Planar QCD. The solution requires addressing two fundamental distinct problems: the \textit{kinematics} of the loop space (finding a stable vacuum geometry that satisfies the loop equations) and the \textit{dynamics} of the flux tube (quantizing the excitations on this background to find the spectrum).

In this first paper (Part I), we solve the geometric problem. We consider the $N_f \to 0$ (quenched) valence limit of QCD at fixed $N_c$. The vacuum is pure Yang--Mills; quark currents are represented by a single closed valence quark loop whose propagation amplitude is multiplied by the Wilson loop $W[C]$ in the gluonic vacuum. We approximate $W[C]$ by the Hodge--dual factor $\exp{-\kappa S[C]}$, neglecting subleading gluon--graph corrections in the large--loop limit. This factor, as we argue in this work, represents a zero-mode solution of the MM chain of multi-loop equations \cite{MMEq79, MM1981NPB, Mig83}.

The central result of this paper is the identification of the \textbf{Hodge-dual minimal surface}. We prove that this surface is the unique additive solution to the Plateau problem that possesses a self-dual area derivative, a property required to satisfy the Bianchi identity in loop space. This rigidity eliminates the Liouville instability inherent to random surface sums.

In the companion paper  \cite{Migdal2026GeometricQCDII}, we populate this rigid surface with internal Majorana fermions ("Elves"). We show that the resulting "Fermi String" theory reproduces the planar graphs of QCD and, crucially, leads to a finite effective action in Momentum Loop Space. The spectral problem is then solved in \cite{Migdal2026GeometricQCDII}, yielding linear Regge trajectories.

\section{Self-dual area derivative and loop equation}
Let us remind why we are looking for the self-dual minimal surface, as suggested in \cite{migdal2025SQYMflow}.
The area derivative for a general functional $W[C]$ was defined in that paper as a discontinuity of the second variation  of the area
\begin{smalleq}
\begin{align}
\label{areaderDef}
&\fbyf{W}{\sigma_{\mu\nu}(\theta)} = \frac{\delta^2W}{\delta C'_\mu(\theta-0)\delta C'_\nu(\theta+0)}- \mu \leftrightarrow \nu
\end{align}
\end{smalleq}
The loop equation for the \YM{} gradient flow was shown in \cite{migdal2025SQYMflow} to reduce to the loop space diffusion equation, which allows the solution $ W = \exp{- \kappa S[C]}$ provided that $S[C]$ is a zero mode of this equation
\begin{smalleq}
\begin{align}
&\mathcal L(S) = 0;\\
& \mathcal L = \oint_\theta C'_\nu (\theta)\left(\ff{C'_\mu(\theta+0)}-\ff{C'_\mu(\theta-0)}\right)\ff{\sigma_{\mu\nu}(\theta)}
\label{loopEq}
\end{align}
\end{smalleq}
The basic observation made in that paper was that there is an SD solution similar to the \YM{} multi-instanton. The loop space diffusion equation \eqref{loopEq} is identically satisfied for any functional $S[C]$ with SD or ASD area derivative
\begin{smalleq}
\begin{align}
&\ast\fbyf{S}{\sigma_{\mu\nu}(\theta)} = \pm \fbyf{S}{\sigma_{\mu\nu}(\theta)};\\
&e_{\alpha\mu\nu\lambda}\left(\ff{C'_\alpha(\theta+0)}-\ff{C'_\alpha(\theta-0)}\right)\fbyf{S}{\sigma_{\mu\nu}(\theta)} \equiv 0;
\label{Bianchi}
\end{align}
\end{smalleq}
The last equation (Bianchi identity) is valid at the kinematical level as a Jacobi identity for a triple commutator (see \cite{migdal2025SQYMflow} for a proof).
The matrix in the Dirichlet boundary conditions (BC)  for $X \sim Q C$ should be chosen so that the area derivative represents the SD or ASD tensor.
This choice would make the exponential of the Hodge-dual minimal area an exact solution to the loop equation.

The existence of this zero mode is heavily based on the so called Leibniz property of the loop operator and similar singular loop operators in loop calculus.
\begin{smalleq}
    \begin{align}
       & \mathcal L f(\Phi[C]) = f'(\Phi[C]) \mathcal L \Phi[C];\\
       & \mathcal L (A[C] B[C]) = \mathcal L( A[C]) B[C] +  A[C] \mathcal L (B[C])
    \end{align}
\end{smalleq}  
It was heuristically claimed in the original MM papers \cite{MM1981NPB, Mig83}, and rigorously proven in the loop calculus  \cite{migdal2025SQYMflow}, section 3, eqs (39), (40). 
Due to the Leibnitz property,
\begin{smalleq}
    \begin{align}
        &\mathcal L \lrb{\exp{- \kappa S[C]} W_0[C]}=\br
        &   -\kappa \mathcal L(S[C])\exp{- \kappa S[C]} W_0[C]\br
        & + \exp{- \kappa S[C]} \mathcal L(W_0[C]) = \exp{- \kappa S[C]} \mathcal L(W_0[C])
    \end{align}
\end{smalleq}
We are looking for such a self-dual minimal surface which is \emph{additive} at the self-intersecting loops. This condition will be discussed in detail below.  It is \emph{critial} for the whole theory-- due to this additivity the confining factor $\exp{-\kappa S[C]}$ is compatible with the whole chain of the loop equations.

\subsection{Compatibility with the MM loop equations.}
The  full chain  of loop equations for the Wilson loops with a finite number of colors $N_c$ \cite{MM1981NPB, Mig83} has the same loop operator \eqref{loopEq} as the \YM{} gradient flow, except there is no time derivative. In this paper, we are only considering the full chain of the multiloop equations for QCD, and \textbf{not} the  \YM{} gradient flow equations. We use, however, the \textbf{loop calculus} developed in that paper, as it applies to the loop diffusion operator \eqref{loopEq}, which is the same in the multiloop MM chain and in the \YM{} gradient flow equation.
It has the structure 
\begin{smalleq}
    \begin{align}
        \mathcal L(W_n) = K_1( W_{n+1})  + K_2(W_n) + K_3(W_{n-1})
    \end{align}
\end{smalleq}
Where the integral operator $K_{1,2,3}$ involves $\delta$ functions for intersections of the loop. 
There are either self-intersections or intersections between the loop, and the three terms  correspond to the cases when loops are either split into two pieces at the intersections ($W_{n+1}$ case) or remain intact ($W_n$ case), or when two loops join into one loop ($W_{n-1}$ case). 
The operators $K_{1,2,3}$ all involve the following
bilocal kernel $$\!\int\!\!\int d\theta_1 d\theta_2\,C_i'(\theta_1)\!\cdot\!C_j'(\theta_2)\,
\delta(C_i(\theta_1)-C_j(\theta_2))\,\cdots$$.
Under $\theta_a=\phi_a(\tau_a)$ (with $\phi_a'\!>\!0$),
\begin{smalleq}
\begin{align}
d\theta_a\,C_i'(\theta_a) &= d\tau_a\,\dot C_i(\tau_a),\nonumber\\
\delta\!\big(C_i(\theta_1)-C_j(\theta_2)\big)
&=\delta\!\big(C_i(\phi_1(\tau_1))-C_j(\phi_2(\tau_2))\big),
\end{align}
\end{smalleq}
So, the kernel is parameterization-invariant. Therefore, the canonical standardization to $[0,2\pi]$
used above is legitimate and additivity of $S[C]$ guaranties that the multiplicative dressing
$W\mapsto W\,\exp{-\kappa S}$ preserves the MM factorization $\,\mathcal L W[C_1\!\circ\!C_2]
= W[C_1]\otimes W[C_2]$ whenever $\mathcal L S=0$ (see \rf{loopEq} and \cite{Mig83,migdal2025SQYMflow}).

Let us stress that this factorization takes place \emph{independently} of the splitting points $\theta_1,\theta_2$. Thus, for every $\theta_1,\theta_2$, the product of two factors $\exp{-\kappa S[C_{1}}$ and $\exp{-\kappa S[C_{2}]}$ exactly equals the factor $\exp{- \kappa S[C]}$ on the left side of the MM equation; thus, it can be taken out of the integral, after which it cancels on both sides of the MM equation.

\paragraph{Clarification of the Solution Status.}
It is important to distinguish the mathematical status of the dressing mechanism from its physical interpretation. \textbf{Theorem 2.1} below establishes an \textit{exact symmetry} of the MM hierarchy: the dressing transformation $W \to W \cdot \exp{-\kappa S[C]}$ maps any solution of the loop equations to another exact solution, provided $S[C]$ is a strictly additive zero-mode.

The \textit{physical hypothesis} of this paper is that the actual gluonic vacuum of QCD is described by this dressed solution, choosing the perturbative vacuum $W_{\text{fluct}}$ as the seed. While the symmetry is exact, the separation into a perturbative factor and a purely geometric area law is physically most distinct in the \textit{large-loop limit}, where the area term dominates. For small loops, the interplay between the perturbative series (and its renormalons) and the non-perturbative parameter $\kappa$ becomes complex, as discussed in Remark 5.3 regarding the Borel summation.

While the mathematical solution is exact, its identification with the QCD vacuum is a physical hypothesis supported by three facts: 
\begin{enumerate}
    \item it is related to the gluon condensate
    \item it yields the correct Area Law (confinement)
    \item it produces the correct linear Regge trajectories (particle spectrum)
\end{enumerate} 

\subsection{Confining factor in the multiloop hierarchy}

\begin{theorem}[Finite-$N_c$ dressing symmetry]\label{lem:finiteNc-dressing}
Let $\{W_n\}_{n\ge 1}$ denote the $n$-loop Wilson loop averages in pure $\mathrm{SU}(N_c)$ Yang--Mills theory, satisfying the MM multiloop hierarchy \cite{Mig83}. Let $S[C]$ be the geometric functional constructed above, with the properties
\begin{enumerate}
    \item $LS = 0$, where $L$ is the loop diffusion operator \eqref{loopEq};
    \item strict additivity at intersections: whenever a loop $C$ self-intersects at $x=y$ into subloops $C_{xy}$ and $C_{yx}$, one has
    \[
        S[C] = S[C_{xy}] + S[C_{yx}],
    \]
    and similarly for intersections between distinct loops.
\end{enumerate}
For any real positive parameter $\kappa$, define the multiplicatively dressed correlators
\begin{smalleq}
    \begin{align}
    &\widetilde W_n(C_1,\dots,C_n)\nonumber\\
    &:= W_{n}(C_1,\dots,C_n)\,
    \exp{-\kappa \sum_{i=1}^n S[C_i]}.
    \label{eq:dressed-multiloop}
    \end{align}
\end{smalleq}
Then, for any finite $N_c$, the family $\{\widetilde W_n\}$ also satisfies the full MM multiloop hierarchy with the same $N_c$.
\end{theorem}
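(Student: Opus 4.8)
The plan is to prove that $\exp{-\kappa S[C]}$ is a multiplicative ``character'' for the loop-splitting and loop-joining operations that appear in the MM hierarchy, so that dressing every $W_n$ by $\exp{-\kappa\sum_i S[C_i]}$ amounts to multiplying every term of every MM equation by one and the same scalar. Concretely, write the $n$-th equation schematically as $\mathcal L_{C_k}(W_n) = K_1(W_{n+1}) + K_2(W_n) + K_3(W_{n-1})$, where $\mathcal L_{C_k}$ is the loop operator \eqref{loopEq} acting in the $k$-th argument and $K_1,K_2,K_3$ are the integral operators built from the parameterization-invariant bilocal kernel $\int\!\!\int d\theta_1 d\theta_2\,C_i'(\theta_1)\!\cdot\!C_j'(\theta_2)\,\delta(C_i(\theta_1)-C_j(\theta_2))\cdots$, supported on the self-intersections of $C_k$ (for $K_1,K_2$) and on intersections of $C_k$ with the remaining loops (for $K_3$). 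The goal is to check that after $W_\bullet\mapsto\widetilde W_\bullet$ each of the four terms picks up exactly the factor $\exp{-\kappa\sum_{i=1}^n S[C_i]}$, whence the dressed equation reduces identically to the undressed one.

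First I would handle the left-hand side. Since $\exp{-\kappa S[C_i]}$ for $i\ne k$ does not depend on $C_k$, it passes through $\mathcal L_{C_k}$; and the Leibniz property of $\mathcal L$ recalled above gives $\mathcal L_{C_k}\lrb{W_n\,\exp{-\kappa S[C_k]}} = \mathcal L_{C_k}(W_n)\exp{-\kappa S[C_k]} - \kappa\,W_n\,\mathcal L_{C_k}(S[C_k])\exp{-\kappa S[C_k]}$, so hypothesis~(1), $\mathcal L S=0$, kills the second term and leaves $\mathcal L_{C_k}(\widetilde W_n) = \exp{-\kappa\sum_i S[C_i]}\,\mathcal L_{C_k}(W_n)$. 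Next the three source terms. $K_2(\widetilde W_n)$ is trivial: no loop is rearranged and the loop family is unchanged, so the constant factor $\exp{-\kappa\sum_i S[C_i]}$ leaves the $\theta$-integral untouched. For $K_1$, the $\delta$-function pins the integration to configurations where $C_k$ self-intersects at $x=y$ and splits into $C_{xy}$, $C_{yx}$; the resulting $(n+1)$-loop term carries $\exp{-\kappa\lrb{\sum_{i\ne k}S[C_i]+S[C_{xy}]+S[C_{yx}]}}$, and hypothesis~(2) rewrites $S[C_{xy}]+S[C_{yx}]=S[C_k]$ \emph{pointwise on the support of the kernel}, i.e.\ for every admissible pair $(\theta_1,\theta_2)$, so this factor equals the constant $\exp{-\kappa\sum_i S[C_i]}$ and pulls out of the integral. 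The distinct-loop clause of hypothesis~(2) does the same for $K_3$: where $C_k$ meets $C_l$ and the two merge into a single loop $C$, additivity gives $S[C]=S[C_k]+S[C_l]$ on the support of the $\delta$-function, so again the factor is $\exp{-\kappa\sum_i S[C_i]}$. Collecting the four identities yields $\mathcal L_{C_k}(\widetilde W_n) = \exp{-\kappa\sum_i S[C_i]}\lrb{K_1(W_{n+1})+K_2(W_n)+K_3(W_{n-1})} = K_1(\widetilde W_{n+1})+K_2(\widetilde W_n)+K_3(\widetilde W_{n-1})$, which is the dressed MM equation at level $n$; since $n\ge 1$ and the marked loop $C_k$ were arbitrary, the entire hierarchy is reproduced with the same $N_c$ (the value of $N_c$ never enters the argument, as the dressing is purely multiplicative, and the same reasoning applied to the inverse dressing $\widetilde W\mapsto\widetilde W\,\exp{+\kappa\sum_iS[C_i]}$ shows it is a symmetry, not merely an implication).

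I expect the main obstacle to be the combinatorial bookkeeping that makes additivity match the structure of $K_1$ and $K_3$: one must verify that the $\delta$-function in each kernel is supported exactly on the configurations to which hypothesis~(2) applies --- a genuine $C^0$ self-intersection $x=y$ of $C_k$ for $K_1$, and a genuine meeting point of two distinct loops for $K_3$ --- and that the parameterization-invariance of the kernel (established just before the theorem via $\theta_a=\phi_a(\tau_a)$, $\phi_a'>0$, together with the canonical standardization of each subloop to $[0,2\pi]$) is precisely what allows the \emph{same} additive decomposition of $S$ to be used uniformly in $(\theta_1,\theta_2)$, so that the dressing factor is a true constant and not merely a function that happens to sit under an integral. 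A secondary point, which I would cite rather than reprove, is the Leibniz identity for the singular operator $\mathcal L$ (proven in \cite{migdal2025SQYMflow}, \S3, eqs.~(39)--(40)); it underpins the left-hand-side step and presupposes only that $S[C]$ admits a well-defined area derivative, which holds here by the self-dual construction of the minimal surface.
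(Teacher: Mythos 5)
Your proposal is correct and follows essentially the same route as the paper's proof: pass $\mathcal L$ through the exponential via the Leibniz property and $\mathcal L S=0$, then use strict additivity (pointwise on the support of the splitting/merging kernels, for both self-intersection and distinct-loop terms) to show that one and the same factor $\exp{-\kappa\sum_i S[C_i]}$ multiplies every term and cancels, with $N_c$ never entering. The only difference is presentational — you phrase the right-hand side abstractly through the operators $K_1,K_2,K_3$, while the paper writes out the explicit finite-$N_c$ combinations such as $W_2[C_{xy},C_{yx}]-\tfrac{1}{N_c^2}W_1[C_{xy}\circ C_{yx}]$ — which does not change the substance of the argument.
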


\begin{proof}
The same loop operator $L$ in \eqref{loopEq} appears on the left-hand side of the general QCD loop equations \cite{Mig83} for arbitrary $N_c$. The difference between the planar and finite-$N_c$ cases lies only in the right-hand side: at a self-intersection, the planar factorization is replaced by a combination of one- and two-loop averages, but the kernel of the bilocal integral is the same.

At a self-intersection point of a single loop $C$, the finite-$N_c$ loop equation involves the combination:
\[
    W_2[C_{xy}, C_{yx}] - \frac{1}{N_c^2}\,W_1[C_{xy} \circ C_{yx}].
\]
This equation is one link in the chain relating $W_n$ to $W_{n\pm 1}$; for finite $N_c$ the chain terminates due to the Fierz identity for $\mathrm{SU}(N_c)$ \cite{Mig83}, a fact recently exploited in lattice studies of the loop equation \cite{kazakovZheng2024}.

In the general multiloop case, $W_n[C^1,\dots,C^n]$, the hierarchy contains integrals over pairs of points $x,y$ lying on two (not necessarily distinct) loops $C^i, C^j$, again with the standard MM kernel. When $x$ and $y$ lie on two distinct loops $C^i$ and $C^j$, the corresponding term involves a merging of these loops in $W_{n-1}$ at the intersection point,
\[
    \frac{1}{N_c^2}\,W_{n-1}[\dots, C^i_{xy} \circ C^j_{yx}, \dots]
    \;-\;
    W_n[\dots, C^i_{xy}, C^j_{yx}, \dots].
\]
Thus, every equation in the hierarchy relates loop functionals evaluated on collections of loops that differ only by local splittings or mergings at intersection points.

Now substitute the dressing ansatz \eqref{eq:dressed-multiloop} into the multiloop hierarchy. Because $\cL S=0$ and $\cL$ is linear, the loop operator on the left-hand side passes through the exponential factor and acts only on the seed functional $W_n$. Identifying this seed with the perturbative solution, $W_n = W_{n,\text{pert}}$, we obtain:
\begin{smalleq}
    \begin{align}
       & \cL\,\widetilde W_n(C_1,\dots,C_n)\nonumber\\
    &= \exp{-\kappa\sum_i S[C_i]}\,
      \cL\,W_{n,\mathrm{pert}}(C_1,\dots,C_n).
    \end{align}
\end{smalleq}
In particular, the dressed correlators satisfy the same differential operator $\cL$ as the undressed ones.

On the right-hand side, consider first a self-intersection of a single loop $C$. After dressing, the corresponding contribution becomes
\begin{smalleq}
\begin{align}
    &\Big(
        W_{2,\mathrm{pert}}(C_{xy}, C_{yx})
        - \frac{1}{N_c^2} W_{1,\mathrm{pert}}(C_{xy} \circ C_{yx})
    \Big)\nonumber\\
    &\exp{-\kappa (S[C_{xy}] + S[C_{yx}])}.
    \label{eq:self-split-dressed}
\end{align}
\end{smalleq}
By strict additivity, $S[C] = S[C_{xy}] + S[C_{yx}]$ at the intersection point $x=y$. Thus, the exponential factor in \eqref{eq:self-split-dressed} is the same as the factor multiplying the left-hand side, and therefore cancels from the equation.

For an intersection of two distinct loops $C^i$ and $C^j$, the right-hand side contains terms of the schematic form
\begin{smalleq}
\begin{align}
    &\Big(
        W_{n,\mathrm{pert}}(C^i_{xy},C^j_{yx})
        - \frac{1}{N_c^2}\,
          W_{n-1,\mathrm{pert}}(C^i_{xy} \circ C^j_{yx})
    \Big) \nonumber\\
    &\times
    \exp{
        -\kappa\big(S[C^i_{xy}] + S[C^j_{yx}] + \sum_{k\neq i,j} S[C^k]\big)
    }.
\end{align}
\end{smalleq}
Again, by additivity at the intersection and for disjoint unions,
\begin{smalleq}
\begin{align}
    &S[C^i] + S[C^j]\nonumber\\
    &= S[C^i_{xy}] + S[C^i_{yx}] + S[C^j_{xy}] + S[C^j_{yx}]\nonumber\\
    &= S[C^i_{xy} \uplus C^j_{yx}]  + S[C^j_{xy} \uplus C^i_{yx}] \nonumber\\
    & = S[C^i_{xy} \uplus C^j_{yx}\uplus C^j_{xy} \uplus C^i_{yx}]
\end{align}
\end{smalleq}
so the sum $\sum_i S[C^i]$ is invariant under the local splitting/merging operation. Hence, the same exponential factor multiplies every term in the multiloop equation and cancels between the two sides.

Since this reasoning applies to each equation in the finite-$N_c$ hierarchy and uses only (i) the universal kernel and operator $\cL$ and (ii) the additivity and zero-mode properties of $S[C]$, it follows that the dressed correlators $\widetilde W_n$ satisfy the same multiloop MM hierarchy for any finite $N_c$.
\end{proof}
\begin{remark}
    Note that the perimeter of the set of loops is also an additive functional, factorizing over the right side of the MM chain of equations. However, this is not a Stokes-type functional, so it does not process finite area derivatives, and, as a consequence, it cannot be added to the zero mode of the loop operator $\cL$. Such terms arise as quark-mass renormalization corrections in the perturbative Wilson-loop factor; they are linearly divergent and depend on the regularization scheme. In any case, these terms do not affect the area law at large loops, which is responsible for quark confinement.
\end{remark}
\begin{remark}
  The parameter $\kappa$ is not determined by the chain of MM equations, nor by the initial value at the vanishing loop $W[0]=1$ (as $S[0]=0$). The boundary condition $W[\infty]=0$ only requires $\kappa \ge0$. Therefore, this parameter is a free nonperturbative parameter, similar to the vacuum expectation value $ \VEV{\tr F_{\mu\nu}^2}$.  The MM equations reflect the equations of motion of QCD, but not the property of the vacuum, so this ambiguity in the solutions of the equations of motion is inevitable: one must add some information regarding the properties of the gluon vacuum state. Later in this paper, we relate this parameter $\kappa$ to the slope of the Regge trajectories, which tells us that it is positive in the physical vacuum and has the same dimension as $\Lambda^2_{QCD}\sim M^2$. Ideally, in complete theory, one would have to relate it directly to $\Lambda_{QCD}$  times a universal dimensionless factor. This important task will be addressed partly later in this paper, and elaborated deeper in \cite{Migdal2026GeometricQCDII}.
\end{remark}
\section{Hodge -dual minimal surface}
\label{ansatz}
Now, we are going to define the solvable functional $S[C]$ satisfying both  the duality and additivity.
We are following the general construction from \cite{migdal2025SQYMflow}, but here we correct an error made in that paper: we only need three components  of the quaternionic field $X_\mu$, not four\footnote{the fourth component was worse than unnecessary, it canceled the correct area derivative, produced by the first three components. In the general proofs in that paper, this cancellation was overlooked. The intended results are restored by dropping the fourth component.}. 
\subsection{ The surface}
We define the surface coordinates $X^i_\mu(\xi)$ (for $\xi= (\xi_1, \xi_2),i=1,2,3$ and $\mu=1,2,3,4$) 
Our surface area  is defined as 
\begin{smalleq}     \begin{align}
\label{areaDef}
    &S_\chi[C] = \int_{\D} d^2 \xi\sqrt{\Sigma_{\mu\nu}^2/2};
 \end{align} \end{smalleq}
 The area element is defined as
\begin{smalleq}     \begin{align}
\label{sigmaDef}
    \Sigma_{\mu\nu} = \epsilon_{a b}\partial_a X^i_\mu\partial_b X^i_\nu;
 \end{align} \end{smalleq}
 The Hodge chirality  $\chi = \pm 1$  enters through the boundary conditions
 \begin{smalleq}
    \begin{align}
    \label{BCX}
        &X^i_\mu(\partial \D) = \eta^{\chi,i}_{\mu\nu} C_\nu ;
    \end{align}
\end{smalleq}
Here $\eta^{\chi,i}_{\mu\nu}$  are 't Hooft's matrices corresponding to Hodge duality $\chi = \pm 1$
\begin{smalleq}
    \begin{align}
      \eta^{\chi,i}_{\mu\nu} =  \delta_{i\mu} \delta_{\nu 4} - \delta_{i\nu}\delta_{\mu 4}  + \chi e_{i\mu\nu 4} ;
    \end{align}
\end{smalleq}
The coordinate functions 
$X_\mu^a(\xi)$ satisfy the general Euler-Lagrange equations for the 
area density $\mathcal{L} = \sqrt{\Sigma^2/2}$:
\begin{smalleq}
    \begin{align}
   \left[ \text{E-L Equations} \right]_\mu^a =  \epsilon_{lm} \partial_l\left(  \frac{\Sigma_{\mu\nu} \partial_m X^a_\nu}{\sqrt{\Sigma^2}}\right)=0.
    \end{align}
\end{smalleq}
Assuming the bulk variation vanishes by the Euler-Lagrange equations, we use Stokes theorem and get the boundary term
\begin{smalleq}
    \begin{align}
    \label{deltaSBC}
        \delta S = 2 \int d \theta   \frac{\delta X^i_\mu\Sigma_{\mu\nu}\pd{\theta}X^{i}_\nu}{\sqrt{\Sigma^2}}
    \end{align}
\end{smalleq}
Substituting the boundary conditions for $\delta X, \pd{\theta}X$ and summing over $a = 1,2,3$ we find
\begin{smalleq}
    \begin{align}
        &\delta S = -2 \int d \theta  \frac{\delta C_\mu \Sigma_{\mu\nu}\pd{\theta}C_\nu}{\sqrt{\Sigma^2}} \br
        &\implies  \fbyf{S}{\sigma_{\mu\nu}} = -2 \frac{\Sigma_{\mu\nu}}{\sqrt{\Sigma^2}}
    \end{align}
\end{smalleq}
Therefore, the Hodge duality of the area element $\Sigma$ with these boundary conditions for generic field $X^a_\mu(z,\bar z)$ leads to the same Hodge duality of the area derivative. Note, negative sign for the area derivative of Hodge-dual surface, which was absent in the ordinary minimal area (plus, the difference in the area element $\Sigma$ which is now Hodge-dual).

This self-duality of area derivative makes our Hodge-dual minimal area  a zero mode of the loop diffusion operator, leading to the solution of the MM loop equations \cite{MM1981NPB, Mig83}, as we argued in the next section.

Note that this area functional is invariant under the reparametrizations  of the boundary loop $ \delta_{\text{param}} C_\mu(s) = \epsilon(s) \pd{s}C_\mu(s) $
\begin{smalleq}
    \begin{align}
        \delta_{\text{param}} S = -2\int d s \epsilon(s)  \frac{\pd{s}C_\mu\Sigma_{\mu\nu}\pd{s}C_\nu}{\sqrt{\Sigma^2}} =0
    \end{align}
\end{smalleq}
by the skew symmetry of $\Sigma_{\mu\nu}$.

The Hodge-dual area is defined as a minimum of this functional under an extra duality constraint
\begin{smalleq}
    \begin{align}
       \ast \Sigma_{\mu\nu } \equiv \oh \epsilon_{\mu\nu\alpha\beta}\Sigma_{\alpha\beta } = \chi \Sigma_{\mu\nu }
    \end{align}
\end{smalleq}
In the previous paper \cite{migdal2025SQYMflow}, we found exact solutions for the special case of a planar bounding loop and proved two inequalities. Unfortunately, four components of $X$ field instead of three were erroneously used in that work. In this paper, we find a complete solution for the Hodge-dual surface, superseding the results of previous work and correcting these errors. 

\subsection{The Holomorphic Anzats}
We  resolve the duality constraint using the following  holomorphic Ansatz:
\begin{smalleq}
    \begin{align}
    \label{ansatz}
        &X^i_\mu = \eta^{\chi,i}_{\mu\nu} Y_\nu\\
        & Y_\mu = f_\mu(z) + \bar{f}_\mu(\bar{z});\\
        & z = \xi_1 + \I \xi_2;\\
        &\Sigma_{\mu\nu}= 2 \I (F_{\mu\nu} + \chi \ast F_{\mu\nu});\\
        & F_{\mu\nu} = f'_\mu \bar{f}'_\nu - \bar{f}'_\mu f'_\nu;
    \end{align}
\end{smalleq}

Geometrically, this is a projection from $\R^3\otimes \R^4$ to $\C^4$.
The boundary conditions for the Hodge-dual surface 
become  a conventional boundary condition for the Dirichlet problem
\begin{smalleq}     \begin{align}
\label{BCf}
    2\Re f_\mu(e^{i \theta}) = C_\mu(\theta)
 \end{align} \end{smalleq}

So, we have a surface in four dimensional complex space $f \in \mathbb C^4$. Our surface area \eqref{areaDef}
reduces to the induced metric in this space
\begin{smalleq}  
\begin{align}
& \Sigma_{\mu\nu} = 2 (F_{\mu\nu} + \chi\ast F_{\mu\nu});\\
& F_{\mu\nu} =\I ( f'_\mu \bar{f}'_\nu - \bar{f}'_\mu f'_\nu);\\
    &\sqrt{\Sigma_{\mu\nu}^2/2}  = 2\sqrt{2} \sqrt{-\det g};\\
    & g_{ab}= \pd{z_a} Y_\mu \pd{z_b}  Y_\mu; \quad z_a,z_b = (z, \bar z)
\end{align}
\end{smalleq}
This area element is Hodge-dual
\begin{smalleq}     \begin{align}
    \ast \Sigma_{\mu\nu} = \chi\Sigma_{\mu\nu}
 \end{align} \end{smalleq}
for arbitrary functions  $f_\mu(z)$.  

This duality condition, as argued above, together with the  above boundary conditions, provides the self-duality of the area derivative of the minimal area. The self-duality of the area derivative follows from the general formula for variation of the surface $X^a_\mu(\xi)$ with area \eqref{areaDef}, projected on the holomorphic Ansatz.

\subsection{The Virasoro Constraint and Uniformization}
We further impose a Virasoro constraint on the holomorphic maps derivatives
\begin{smalleq}     \begin{align}
\label{Virasoro}
    f'_\mu(z)^2 =0
 \end{align} \end{smalleq}
The imposition of the null constraint $(f')^2 = 0$ is not an approximation but a rigorous gauge fixing procedure rooted in the geometric symmetries of the problem.

The physical starting point is the geometric Area functional, $S = \int \sqrt{-\det g} \, d^2z$, which possesses full reparametrization invariance under general diffeomorphisms $z \to w(z, \bar{z})$. This infinite symmetry group allows us to choose a specific coordinate system on the world-sheet without altering physical results.

We choose to work in \textbf{isothermal (conformal) coordinates}, where the induced metric is diagonal:
\begin{smalleq}     \begin{align}
ds^2 = \rho(z, \bar{z}) \, dz \, d\bar{z} \implies g_{zz} = \partial_z Y \cdot \partial_z Y= 0
 \end{align} \end{smalleq}
For our specific ansatz \eqref{ansatz}, the metric component $g_{zz}$ is explicitly proportional to the square of the holomorphic derivative:
\begin{smalleq}     \begin{align}
g_{zz} = (f'_\mu)^2 
 \end{align} \end{smalleq}
Thus, the geometric condition of conformal gauge fixing $g_{zz}=0$ is algebraically equivalent to the null constraint $(f')^2 = 0$.

The existence of such a coordinate system for any surface topology is guaranteed by the \textbf{Uniformization Theorem} \cite{poincare1907uniformisation}. This theorem ensures that we are always free to set $(f')^2 = 0$ by a suitable diffeomorphism. In this gauge, the non-linear geometric area density simplifies exactly to the quadratic Lagrangian:
\begin{smalleq}     \begin{align}
\mathcal{L} &= \sqrt{-\det g} = \sqrt{(g_{z\bar{z}})^2 - |g_{zz}|^2} \br
&\xrightarrow{(f')^2=0} g_{z\bar{z}} \propto |f'|^2
 \end{align} \end{smalleq}
This reduction allows us to describe the Minimal Surface using the linear Laplace equation for $f_\mu$, subject to the Virasoro constraint.
To summarize, we have found that 
\begin{smalleq}
    \begin{align}
    \label{Dirichlet}
       & S_\chi[C] = 2 \sqrt{2} \int_D |f'(z)|^2 d^2 z ;\\
        \label{areader}
       & \fbyf{S_\chi[C]}{\sigma_{\mu\nu}} = -2 \frac{(F_{\mu\nu} + \chi\ast F_{\mu\nu})}{\sqrt{2 F_{\mu\nu}^2}};
    \end{align}
\end{smalleq}
The solution of the linear boundary problem \eqref{BCf} for a holomorphic vector function is given by the Hilbert transform
\begin{smalleq}
    \begin{align}
    \label{HilbertTrans}
       & f_\mu = \oh(1 + \I\mathcal H) C_\mu
    \end{align}
\end{smalleq}
In terms of the Taylor coefficients
\begin{smalleq}
    \begin{align}
    \label{HilbertFourier}
    & f_\mu(z) = \sum_{n>0} z^n C_{\mu,n};\\
     & C_\mu(\theta) = \sum_{n=-\infty}^\infty e^{\I n \theta} C_{\mu,n}
    \end{align}
\end{smalleq}
In presence of the Virasoro constraint \eqref{Virasoro}, the problem becomes a challenging one, and it was investigated over the last centuries by great mathematicians, starting with Riemann and Hilbert. We give a brief summary of modern state of this theory in the \cite{Migdal2026GeometricQCDII}, where this theory will be used in the final stage of the analytic solution.

However, for the solution of the quark confinement problem, this general theory can be bypassed, which leads to solvable equations for relativistic quarks in the presence of a confining force coming from our minimal surface. This new theory will be described in subsequent sections.
\subsection{Comparing Hodge-dual and the conventional minimal surface}
The conventional minimal surface in $\R^4$ is described by a similar parametric equation $X_\mu(\xi)$ with one instead of three internal components of the $X^a_\mu$ field. The corresponding functional:
 \begin{smalleq}     \begin{align}
    &S_{\text{Eucl}}[C] = \int_{\D} d^2\xi\sqrt{\Sigma_{\mu\nu}^2/2};\\
    &\Sigma_{\mu\nu} = \epsilon_{a b}\partial_a X_\mu\partial_b X_\nu;\\
    & X_\mu(\partial \D) = C_\mu;
 \end{align} \end{smalleq}
leads to the similar first variation and area derivative
\begin{smalleq}
    \begin{align}
    \label{deltaSBC1}
       & \delta S = 2 \int d \theta   \frac{\delta X_\mu\Sigma_{\mu\nu}\pd{\theta}X_\nu}{\sqrt{\Sigma^2}};\\
       & \fbyf{S}{\sigma_{\mu\nu}} = 2\frac{\Sigma_{\mu\nu}}{\sqrt{\Sigma^2}}
    \end{align}
\end{smalleq}
with the important distinction that the area element $\Sigma_{\mu\nu}$ is neither self-dual nor anti-dual in this case.

This difference is not visible at the level of the minimal value of the area. The holomorphic solution in the $\R^4$ case, with the same Virasoro constraint for conformal metric
\begin{smalleq}
    \begin{align}
    \label{deltaSBC2}
      & X_\mu = f_\nu(z) + \bar f_\mu(\bar z) ;\\
      & (f'_\mu)^2 =0;\\
      & \Sigma_{\mu\nu} = 2\I(f'_\mu \bar f'_\nu -f'_\nu \bar f'_\mu)
    \end{align}
\end{smalleq}
leads to the Dirichlet functional for the minimal $\R^4$ area
\begin{smalleq}
    \begin{align}
       & S_{\text{Eucl}}[C] = \int_D |f'(z)|^2 d^2 z ;
    \end{align}
\end{smalleq}
Up to normalization, this is the same functional that we have found for the Hodge dual surface and is independent of Hodge duality $\chi$.

 In both cases, this functional equals the net  area of conformal maps $f_\mu(z)$  and can be rewritten as the boundary integral.
 \begin{smalleq}
     \begin{align}
 & S_{\text{Eucl}}[C]  \propto 2 \int d\theta   C'_\mu(\theta) \Im f_\mu(\theta) \br
 &\propto \int d \theta C'_\mu(\theta)\int d \theta'  K(\theta-\theta')  C_\mu(\theta')
     \end{align}
 \end{smalleq}
 with Hilbert kernel $K(\theta-\theta')$.  
 As observed by Douglas, the Virasoro constraint can be imposed by further minimization of this functional by parametrization of the bounding loop $C(\theta) \Ra C(\phi(\theta))$. 
 The variational equation for this minimization
 \begin{smalleq}
     \begin{align}
 C'_\mu(\theta) \fbyf{S_{\text{Eucl}}[C] }{C_\mu(\theta)} =0
     \end{align}
 \end{smalleq}
 reduces to
 \begin{smalleq}
     \begin{align}
&C'_\mu(\theta)  \partial_\theta \Im \partial_\theta f_\mu \propto 
\Re \partial_\theta f_\mu \Im \partial_\theta f_\mu \br
&\propto \Im (\partial_\theta f_\mu)^2 \propto \Im (z f'_\mu)^2 = \Im (z^2 (f'_\mu)^2 ) =0
     \end{align}
 \end{smalleq}
 which is equivalent to Virasoro constraint at the boundary.

 Now we can compute the area derivative of the Dirichlet functional for an arbitrary parameterization of the boundary curve.
 
 The area derivative of this functional, as defined in \eqref{areaderDef}  would  yield exactly zero, as the second functional derivative is a symmetric tensor proportional to $\delta_{\mu\nu}$. The minimization over parametrizations would not change this fact, as we get zero for arbitrary parametrization. The area derivative of the minimal value of the Douglas-Dirichlet functional corresponds to the area derivative of this functional, taken at the optimal parametrization. This area derivative would, therefore, yield zero.

We conclude that the area derivative of the holomorphic functional yields zero, contradicting exact computations for the full area functional. 

So, the area derivative, for some reason, involves more than just the minimal value of the area considered as a functional of its boundary. This paradox is discussed in the next section.
\subsection{The hidden variables of the minimal surface}
\label{hidden}

The apparent paradox is resolved by examining the variational principle in the extended space. The functional $S_\chi[C]$ is defined as the minimum of the area functional for a surface $X^a_\mu \in \mathbb{R}^3 \otimes \mathbb{R}^4$. Crucially, the boundary of this extended surface is rigidly locked to the physical loop $C_\mu$ via the chiral 't Hooft symbols:
\begin{smalleq}
\begin{smalleq}\begin{align}
X^a_\mu\big|_{\partial} = \eta^{\chi, a}_{\mu\nu} C_\nu
\end{align}\end{smalleq}
\end{smalleq}
When we compute the area derivative, we vary the physical loop $C_\mu$ in standard 4D space. However, due to the boundary coupling, this 4D variation \textbf{induces} a specific variation of the surface in the 12-dimensional embedding space. We are probing the shape of this 12-dimensional surface in the vicinity of its 4-dimensional edge.

By the Envelope Theorem (or the principle of stationary action), the variation of the minimal value with respect to the boundary parameters is determined solely by the boundary terms, as the bulk Euler-Lagrange variations vanish. Thus:
\begin{smalleq}
\begin{smalleq}\begin{align}
\delta S_\chi = \oint d\theta \, \frac{\delta \mathcal{L}}{\delta (\partial_\theta X^a_\mu)} \delta X^a_\mu = \oint d\theta \, \Pi^a_\mu (\eta^{\chi, a}_{\mu\nu} \delta C_\nu)
\end{align}\end{smalleq}
\end{smalleq}
While the scalar minimal value of the action (the energy) is invariant under the isometric embedding $\eta$, the conjugate momentum $\Pi^a_\mu$ and the boundary coupling $\eta$ carry the chiral index. The first variation $\delta S$ vanishes at the minimal surface , but the area  derivative involves the second variation, related to variation of this conjugate momentum by $C'$. So to say, we are measuring the elasticity of the spring , which pulls the surface in 12D to its minimal shape. This minimal shape, does not depend on chirality but the elasticity tensor does.

The same arguments explain the difference between the area derivative of the ordinary Nambu-Goto area functional and the Dirichlet functional. The area derivatives uses the Hessian of the functional at the minimum, which is different for the full functional and the holomorphic projection, minimizing this full functional. This difference is responsible for the nonzero value of area derivative for the Nambu-Goto area functional

The analogy would be the computation of the hessian of the function of several variables at the extremal set, where the gradient of this function vanishes.
Neither the extremal value of the function, nor this vanishing gradient depend on some hidden parameters, but the hessian does. 
\begin{remark}[Toy example: same minimum value, different second variations]
Let
\[
F_a(x,y)=(y-a x)^2+x^2 \qquad (a\in\mathbb{R}).
\]
Then $y_\star(x)=ax$ minimizes in $y$ and the minimized value is
\[
g_a(x):=\min_y F_a(x,y)=F_a(x,ax)=x^2
\] independent of $a$.
Yet the $x$--Hessian of the \emph{full} functional on the minimizing locus depends on $a$:
\[
\left.\frac{\partial^2 F_a}{\partial x^2}\right|_{y=ax}=2(1+a^2)\neq g_a''(x)=2.
\]
\end{remark}

\subsection{Symmetrization of the area by parity}
The extremal value
\[
S_\chi[C]\;:=\;\min_{X}\,\mathcal{S}_\chi[X;C]
\]
is $\chi$-independent, but the loop-space area derivative---defined as the
discontinuity of the second variation and evaluated on the $\chi$-dependent
stationary surface in the extended space---retains $\chi$-dependence through
the boundary locking.
This chirality independence of the minimal value is manifest in the Holomorphic Ansatz: neither the Dirichlet functional \eqref{Dirichlet}, nor the Virasoro constraint \eqref{Virasoro}, nor the boundary condition \eqref{BCf} depend on $\chi$.  This common minimal value corresponds to a particular solution of the Plateau problem—the one which is additive over disconnected parts of the loop, including the parts arising at self-intersections. In the confining factor below, we preserve the parity of QCD by using a symmetric combination:\begin{smalleq}\begin{align}\label{areaSymm}S[C] = \frac{1}{2}( S_+[C] + S_-[C])\end{align}\end{smalleq}Its value equals the common value of these two functionals, but each term represents a different zero mode of the loop operator $\cL$ in \eqref{loopEq}, distinguished by Hodge duality $\chi = \pm 1$.The area derivative of the symmetrized area \eqref{areaSymm} is proportional to that of the Nambu-Goto string. By summing the self-dual and anti-self-dual projections:
\begin{smalleq}\begin{align*}
\frac{\delta S}{\delta \sigma} &= \frac{1}{2}\left(2 \frac{F + \ast F}{|F + \ast F|} + 2\frac{F - \ast F}{|F - \ast F|} \right) \\
&= \frac{2 F}{\sqrt{F^2 + (\ast F)^2}} = \frac{F \sqrt{2}}{|F|}
\end{align*}\end{smalleq}
Note that for the standard Nambu-Goto string, this factor would simply be $\frac{2F}{|F|}$. The difference is a factor of $\sqrt{2}$.

One may then ask: why do we need the Hodge-dual surface at all? If the area derivative of the symmetrized Hodge-dual surface is equivalent (up to normalization) to that of the Nambu-Goto string, why not claim the Nambu-Goto string itself solves the loop equation as a zero mode? Direct calculation in \cite{migdal2025SQYMflow} shows this is not possible. The area derivative of the conventional minimal area is well-defined, but it does not solve the loop equation: it produces in this equation a singular factor $\frac{\delta}{\delta^2 + \epsilon^2}$, depending on the ratio of the two infinitesimal cutoff parameters involved in the definition of the loop variations. Depending on this ratio, we obtain an arbitrary number between zero and infinity instead of the required zero in the loop equation.

The only way to satisfy the loop equation is to return to the symmetric sum of the \emph{full functionals} \eqref{areaSymm}, apply the area derivative to each term, and utilize the crucial identity that the sum of the derivatives relates to the difference of their duals:
\begin{smalleq}\begin{align*}
\frac{\delta S}{\delta \sigma} &= \ast \frac{\delta S_+}{\delta \sigma} - \ast \frac{\delta S_-}{\delta \sigma},\\
\frac{\delta S_\chi}{\delta \sigma} &= -2 (F + \chi \ast F)/\sqrt{ 2 F^2};\\
F_{\mu\nu} &= 2 \Im f_\mu \bar f_\nu
\end{align*}\end{smalleq}
We recall that each of these individual area derivatives $\frac{\delta S_\chi}{\delta \sigma}$ is the antisymmetric tensor part of the second derivative $\frac{\delta^2 S_{\pm}}{\delta C'_\mu(\theta-) \delta C'_\nu(\theta+)}$. We can then apply the Bianchi identity \eqref{Bianchi} to each term independently to satisfy the loop equation.
At the technical level, the dot derivative $\ff{C'_\mu}$ when applied to $(F_{\mu\nu} + \chi \ast F_{\mu\nu})/\sqrt{ F^2}$ yields this singular factor $\frac{\delta}{\delta^2 + \epsilon^2} \dot C_\nu$, but when the area derivative is represented exactly as the extremal value of $\chi \ast\frac{\delta^2 S_{\chi}}{\delta C'_\mu(\theta-) \delta C'_\nu(\theta+)}$, the same dot derivative $\ff{C'_\mu}$, taken before projecting on the holomorphic minimizer, yields zero by Bianchi identity.

In other words, the value of the area derivative of the symmetric combination $S_+ + S_-$ is not sufficient to compute the third derivative $\frac{\delta}{\delta C'}$ involved in the loop equation: one must retain the full functional definitions of the components $S_\pm$. The Bianchi identity allows us to bypass the laborious direct computation of this third functional derivative by exploiting the Hodge dualities of the two area derivatives. However, were we to perform this direct computation without invoking the Bianchi identity—retaining the full structure of $S_\pm$—we would find that the net contribution to the loop equation vanishes regardless. The singularities that plague the Nambu-Goto string, as well as the regular terms, must cancel exactly between the two chiral sectors to satisfy the Bianchi identity.

As an example of such a cancelation, one could consider a similar cancelation of Feynman diagrams in the expectations of $\hat P [D_\mu , \star F_{\mu\nu}]  \exp{\int A d x}$. With dimensional regularization, these diagrams cancel including regular and singular parts (poles in $\epsilon = 4-d$).
\subsection{Planar Loop Solution}

We consider the specific ansatz for a planar loop where the derivative of the coordinate vector, $f'_\mu(\xi)$, is defined by a single scalar holomorphic function $w'(\xi)$ multiplied by a constant isotropic vector:

\begin{smalleq}     \begin{align}
f'_\mu(\xi) = w'(\xi) \begin{pmatrix} \frac{1}{\sqrt{2}} \\ \frac{-\I}{\sqrt{2}} \\ 0 \\ 0 \end{pmatrix}_\mu
 \end{align} \end{smalleq}

\subsection{Verification of Null Condition}
The null condition $\left( f'(\xi) \right)^2 = 0$ is satisfied identically due to the vector structure:

\begin{smalleq}     \begin{align}
\sum_{\mu=1}^4 f'_\mu f'_\mu = (w'(\xi))^2 \frac{1^2 + (-\I)^2 + 0^2 + 0^2}{2}  = 0
 \end{align} \end{smalleq}

\subsection{Coordinate Integration}
The spacetime coordinates $X^a_\mu(\xi)$ are recovered by integrating the holomorphic derivatives. Assuming the physical coordinates are the real part of the holomorphic trajectory $X^a_\mu(\xi) =2\eta^{a,\chi}_{\mu\nu}\Re f_\nu(z)$:

\begin{smalleq}    \begin{align}
\Re f_1(z) &= \Re [w(z)] \\
\Re f_2(z)  &= \Re [-\I w(z)]  =\Im [w(z)] \\
\Re f_3(z) &= 0 \\
\Re f_4(z) &= 0
\end{align}\end{smalleq}

\subsection{Resulting Geometry}

This solution represents (up to rotation with $\eta^a$ matrices) a general conformal mapping from the parameter domain $\xi$ directly to the physical plane $(X_1, X_2)$, with trivial coordinates in the transverse dimensions $X_3, X_4$. The function $w(\xi)$ completely determines the shape of the planar loop.
This planar solution matches the one found in \cite{migdal2025SQYMflow} for planar loop.

\paragraph{Verification of the Loop Equation.}
For this planar solution, we can explicitly verify the zero-mode condition. The area derivative for the planar ansatz reduces to the simple algebraic form
\begin{smalleq}     \begin{align}
\label{circelArder}
\delta S_\chi/\delta \sigma_{\mu\nu} = -2 (F_{\mu\nu} + \chi \ast F_{\mu\nu})/ \sqrt{2F^2}
\end{align} \end{smalleq}
which is  manifestly self-dual. Therefore, the condition $*\frac{\delta S_\chi}{\delta \sigma} = \chi\frac{\delta S_\chi}{\delta \sigma}$ holds pointwise. By the Bianchi identity (2.5), this implies $\mathcal{L}S[C] \equiv 0$ without reliance on the general loop calculus of Ref [4]. This confirms that the explicit planar solution is indeed an exact zero-mode of the loop operator. This is, however, a rather trivial zero mode, as the normalized tensor $ (F_{\mu\nu} + \chi \ast F_{\mu\nu})/ \sqrt{2F^2}= \eta^{3,\chi}_{\mu\nu}$ is a universal constant tensor, with vanishing functional derivative.

\section{Topological Stability and General Variational Principle}

\subsection{Stationarity and Boundary Variation}
Let $C$ be a self-intersecting loop. The Hodge surface minimization problem admits an 
\textbf{additive solution} $\Sigma_{add}$ which is a stationary point 
(local minimum) of the geometric Area functional $S[X]$ defined on the 
extended space of maps $X: \D \to \mathbb{R}^3 \otimes \mathbb{R}^4$.

Because $\Sigma_{add}$ is a stationary point, the coordinate functions 
$X_\mu^a(u,v)$ satisfy the general Euler-Lagrange equations for the 
area density $\mathcal{L} = \sqrt{\Sigma^2/2}$.
This condition ensures that for \textit{any} variation $\delta X_\mu^a$ 
in the wider functional space (not limited to holomorphic functions), 
the bulk contribution to the variation of the action vanishes:
\begin{smalleq}\begin{align}
    \delta_{bulk} S = -\int_D d^2 \xi \, \delta X_\mu^a 
    \left[ \text{E-L Equations} \right]_\mu^a .
\end{align}\end{smalleq}
Consequently, the total variation of the area functional reduces 
exactly to the boundary flux term derived in \eqref{deltaSBC}.
This validates the computation of the area derivative and  subsequent proof of its Hodge-duality for arbitrary loop as long as the surface $X^a_\mu \in \R^3 \otimes \R^4$  minimizes the area functional \eqref{areaDef}.

The existence of the additive minimal solution is guaranteed by the 
results of  \cite{Douglas1931} and \cite{Rado1930}, who established that the 
disconnected solution minimizes the functional for separated or 
touching boundaries. Since the surface is an extremum in the full 
functional space, the bulk variation vanishes identically by the 
principle of stationary action.

\begin{remark}[Stability of the Goldschmidt Solution]
While a connected minimal surface (global minimum) may exist, the QCD 
Loop Equation requires the additive Goldschmidt solution to satisfy 
the factorization condition $S[C_1 \! \! \circ C_2] = S[C_1] + S[C_2]$. The topological barrier proven by
\cite{Gulliver1973} ensures that this additive solution is a local minimum, which is all we need for the proof of Hodge-duality. 
\end{remark}
\begin{remark}
The continuity of the transition between a single loop with an infinitesimal neck and two touching loops (the self-intersecting limit) is mathematically grounded in the \textit{Bridge Principle} for stable minimal surfaces \cite{White1996}. This principle establishes that given two stable minimal surfaces bounded by curves $C_1$ and $C_2$, one can construct a connected minimal surface bounded by a single loop formed by joining $C_1$ and $C_2$ with a thin ``bridge'' of width $\epsilon$. As $\epsilon \to 0$, this connected surface converges smoothly (in the varifold sense) to the union of the original surfaces. Consequently, the \textit{Goldschmidt solution} we utilize---which effectively treats the area as additive, $\mathcal{A}(C_1 \cup C_2) = \mathcal{A}(C_1) + \mathcal{A}(C_2)$---represents the correct continuous limit of the loop equation dynamics, avoiding the discontinuous jumps associated with the global minimization of the Plateau problem (such as the catenoid collapse). This ensures that the loop derivative remains well-defined across topology changes. Specifically, because the convergence is smooth in the varifold sense, the functional derivatives required for the loop equation (which probe the surface near the loop $C$) are continuous limits of the derivatives of the connected surface, justifying the application of the differential loop calculus to the additive Goldschmidt functional.
\end{remark}
We assume (or conjecture, supported by \cite{Douglas1931,Rado1930,White1996}) that the Goldschmidt/local-minimum branch defines a continuous additive functional on immersed loops with transverse self-intersections, and that the loop-space area derivative exists almost everywhere along each branch. Under this assumption, Theorem \ref{lem:finiteNc-dressing} applies.

\section{Confinement theorem}\label{sec:confinement}
Now, we can summarize the arguments of the previous section in a theorem.
\begin{theorem}[Confinement area law]\label{thm:Confinement}
The common minimal value of self-dual (and anti-dual) area $S_\chi[C]$for any loop without self-intersections is equal to the ordinary euclidean minimal area bounded by the same loop 
\begin{smalleq}     \begin{align}
S_{\chi}[C] = 2\sqrt{2} \, \mathcal{A}_{\text{Plateau}}[C]
\label{areaLaw}
 \end{align} \end{smalleq}
At self-intersections or for several disconnected closed components of the loop the area is additive over closed parts, unlike the euclidean minimal area which may change topology for a smaller area.
\begin{proof}
    The minimal value of Self-Dual Area $S_{\chi}$ is given by the Dirichlet integral \eqref{Dirichlet}  with boundary conditions $$ 2 \Re f_\mu(e^{\I \theta}) = C_\mu(\theta)$$
    under the Virasoro constraint \eqref{Virasoro}. These are exactly the  conditions of the classical Plateau problem. Therefore, every local minimum of our problem is also the local minimum of the Plateau problem.   The difference between solutions could arise in case of self-intersecting loop when there could be several local minima. Our solution selects the one that is additive over the closed parts of the loop (the Goldschmidt solution), whereas the Plateau problem selects the global minimum, which could be topologically different (say, a cylindrical surface bounded by both parts of the loop). According to the theorems \cite{Rado1930, Douglas1931, Gulliver1973}, such a disconnected, locally minimal surface always exist.
\end{proof}
\begin{remark}
    The equality of the areas $S_\pm[C]$ only holds for the minimal values. These areas are different as functionals, which leads to a difference in the second order functional derivatives at the minimum.
    As a result, the area functional $S_{\chi}[C]$ has the area derivatives with Hodge duality $\chi = \pm 1$ despite having common minimal value. 
    This phenomenon was discussed in section \ref{hidden} above.
\end{remark}
\end{theorem}
\begin{remark}
\textbf{(Distinction from String Theory and Lattice Strong Coupling).}
Unlike the lattice QCD strong-coupling expansion, there are no intrinsic ``worldsheet'' modes in our rigid minimal surface to generate daughter trajectories. While strong-coupling expansions in lattice gauge theory do involve sums over tessellated surfaces, these are essentially lattice artifacts that do not survive the continuum limit. It is well known that a sum over random surfaces does not exist in the 4D continuum theory due to conformal anomalies (which lead to the branched polymer instability); these anomalies are often undetectable in coarse lattice simulations. Consequently, using the lattice strong-coupling phase as a guide for the continuum theory is misleading, as it fails to respect fundamental continuum symmetries---specifically the conformal anomalies and the Hodge duality required for the mechanism described here. Our solution operates in the continuum limit where these constraints enforce a unique, rigid minimal surface rather than a sum over fluctuating surfaces.
\end{remark}
\begin{remark}
    We must also specify the definition of the fluctuation factor $W_{fluc}[C]$ in the Wilson loop. It is uniquely defined as an asymptotic expansion generated by iterating the MM equations [3, 4], starting from $W=1$. The physical hypothesis of this paper is that the actual gluonic vacuum of QCD is described by this dressed solution, choosing the fluctuation factor $W_{fluc}$ as the seed. Crucially, while the rigid minimal surface $S[C]$ provides the exact ground-state area law (confinement), $W_{fluc}[C]$ encapsulates all dynamic transverse gluon-exchange fluctuations. Therefore, physical phenomena such as the L\"uscher term ($-\pi/24R$) and the spectrum of excited flux tubes (hybrid mesons) observed in Lattice QCD are entirely generated by $W_{fluc}[C]$, perfectly complementing the geometric rigidity of the background vacuum. The ambiguity associated with adding the dressing factor with an arbitrary parameter $\kappa$ corresponds to the inherent non-perturbative ambiguity (Borel renormalons) in perturbative QCD.
\end{remark}
\subsection{The Zero Mode as the Resummed Multi-Instanton Vacuum}

It is crucial to emphasize the physical meaning of the confining factor
$\exp{-\kappa S[C]}$ as an exact zero mode of the loop diffusion operator.
The Makeenko-Migdal loop equations represent the exact quantum equations of
motion for Planar QCD. However, like any equations of motion, they do not
uniquely dictate the physical vacuum state of the theory. The properties of
the physical vacuum must be supplied independently, appearing mathematically
in the loop calculus as the choice of a specific zero-mode background.

In standard perturbation theory, one implicitly chooses the trivial
perturbative vacuum by iterating the loop hierarchy starting from the seed
$W_{fluc}[C] = 1$. While this successfully generates massless gluon exchange
diagrams, it fails to capture confinement. To describe the true physical
vacuum of QCD, the equations of motion must be solved in the presence of a
nonperturbative background.

In the traditional gauge-field representation, this nonperturbative vacuum is
envisioned as a complex mixed state---a dense ensemble or ``liquid'' of
multi-instantons and anti-instantons. A fundamental difficulty in this
coordinate-space representation is that any specific multi-instanton
configuration $A_{\mu}^{cl}(x)$ explicitly breaks macroscopic translation
invariance by localizing energy at the instanton centers. To recover the
true, translation-invariant physical QCD vacuum, one must perform an
exceedingly complicated superposition (integration) over the
infinite-dimensional moduli space of all multi-instanton sizes, positions,
and color orientations. In coordinate space, this summation is analytically
intractable.

This is where the profound power of the loop space representation reveals
itself. The mathematical condition for our loop-space zero mode is the
self-duality (or anti-self-duality) of the area derivative, which is the
exact geometric analogue of the $F_{\mu\nu} = \pm *F_{\mu\nu}$ condition for
instantons. However, because our minimal area functional $S_{\chi}[C]$
depends only on the relative geometry of the boundary loop $C$ and is
invariant under rigid translations, the exact translational invariance of the
vacuum is preserved from the outset.

Consequently, the horrifyingly complex topological ensemble of
multi-instantons in the gauge-field representation geometrically ``resums''
and simplifies in loop space into a single, exact, translation-invariant
object: the Hodge-dual minimal surface. The Area Law is therefore not an ad
hoc deformation, but rather the exact mathematical zero mode corresponding
to this physical mixed-state vacuum. The string tension parameter $\kappa$,
which factors out of the equations of motion entirely, acts as the
macroscopic order parameter (the gluon condensate) emerging from this
microscopic ensemble. In the next Section we compare the small-loop behavior
of this dressed solution with the Operator Product Expansion to explicitly
verify this relationship.

Subject to these caveats, we have established a precise correspondence between QCD and the area law. As demonstrated in the next paper \cite{Migdal2026GeometricQCDII}, this framework yields the standard rising Regge trajectories while avoiding the typical inconsistencies associated with string theory in four dimensions. Our approach relies solely on minimal surfaces—a classical and well-defined subject in modern geometry. 
While the mathematical theory of minimal surfaces as solutions to the Plateau problem is extensive, it generally does not provide explicit analytic solutions for an arbitrary loop $C$.
Fortunately, the main application of our minimal surface -- the rising meson spectrum in QCD-- does not require such a general solution. The relevant loops in this case, as we shall see in the \cite{Migdal2026GeometricQCDII} of this series, simplify, allowing for the analytic solution for the minimal surface and directly leading to linear Regge trajectories.

\section{String Scale $\kappa$ and Gluon Condensate}

The string mass scale $\kappa$ is not fixed by the MM equation itself. As we 
have seen, the confining factor $\exp{-\kappa S[C]}$ passes through the equation 
without correlation with the perturbative solution $W_{pert}[C]$. This factor 
reflects the non-perturbative properties of the QCD vacuum rather than the 
perturbative fluctuations of the gluon field.

The fundamental non-perturbative property of the QCD vacuum is the gluon 
condensate~\cite{Shifman:1978bx}. We relate the string tension $\kappa$ to 
this condensate by considering a circular loop $C$ of small radius $r$. 

According to our solution, for a circular loop of radius $r$ one has
$S[C]=2\sqrt{2}\,\pi r^{2}$, and we evaluate the second area derivative at the
two opposite points $x_{1}=C(0)$ and $x_{2}=C(\pi)$, with $|x_{12}|=2r$.
In asymptotically free QCD the OPE gives, at short distances,
\begin{smalleq}
\begin{align}
&\frac{\delta^{2} W_{\mathrm{pert}}[C]}{\delta\sigma_{\mu\nu}(1)\,\delta\sigma_{\lambda\rho}(2)}
\propto
\big\langle \mathrm{Tr}\,F_{\mu\nu}(x_{1})\,U_{0,\pi}\,F_{\lambda\rho}(x_{2})\,U_{\pi,2\pi}\big\rangle \nonumber\\
&\to \frac{\alpha_{\mathrm{eff}}}{4\pi}\,
\frac{ I_{\mu\lambda}I_{\nu\rho}-I_{\nu\lambda}I_{\mu\rho} }{|x_{12}|^{4}}\,
W_{\mathrm{pert}}[C] \;+\;\cdots ,
\label{OPE}
\end{align}
\end{smalleq}
where $I_{\mu\lambda}=\delta_{\mu\lambda}-2\hat x_{\mu}\hat x_{\lambda}$ and the ellipsis
denotes terms less singular as $|x_{12}|\to 0$ (including local condensates).
We only track the tensor structure and the first nonvanishing local operator after projection; overall Wilson-coefficient conventions match SVZ \cite{Shifman:1978bx}.
To isolate the nonperturbative dimension--4 contribution, we contract the
antisymmetric index pairs with the projector
$\Pi_{\mu\nu\lambda\rho}=\frac12(\delta_{\mu\lambda}\delta_{\nu\rho}-\delta_{\nu\lambda}\delta_{\mu\rho})$.
A direct contraction shows that the leading conformal structure is annihilated:
\[
\Pi_{\mu\nu\lambda\rho}\big(I_{\mu\lambda}I_{\nu\rho}-I_{\nu\lambda}I_{\mu\rho}\big)=0 .
\]
Therefore the projected OPE starts at dimension~4, and the leading surviving term is
the gluon condensate,
\begin{smalleq}
    \begin{align}
&\Pi_{\mu\nu\lambda\rho}\,\mathrm{Tr}\,F_{\mu\nu}(1)\,U_{1,2}\,F_{\lambda\rho}(2)\,U_{2,1}\to\br
&-\frac{g^{2}}{2N_{c}}\langle (G^{a}_{\mu\nu})^{2}\rangle\,W_{\mathrm{pert}}[0] .
\label{PIII}
 \end{align}
\end{smalleq}

To verify this projection, define the scalar contraction
\begin{smalleq}
    \begin{align}
    \label{Seq}
&S(d)\;:=\;\Pi_{\mu\nu\lambda\rho}\Big(I_{\mu\lambda}I_{\nu\rho}-I_{\nu\lambda}I_{\mu\rho}\Big),
\br
&\Pi_{\mu\nu\lambda\rho}=\tfrac12(\delta_{\mu\lambda}\delta_{\nu\rho}-\delta_{\nu\lambda}\delta_{\mu\rho}),
\br
&I_{\mu\lambda}=\delta_{\mu\lambda}-2\hat x_\mu\hat x_\lambda ,
 \end{align}
\end{smalleq}
in Euclidean dimension $d$ (so $\delta_{\mu\mu}=d$ and $\hat x^2=1$). Expanding the contraction gives
\begin{smalleq}
    \begin{align}
S(d)
&=\oh(\delta_{\mu\lambda}\delta_{\nu\rho}-\delta_{\nu\lambda}\delta_{\mu\rho})
\big(I_{\mu\lambda}I_{\nu\rho}-I_{\nu\lambda}I_{\mu\rho}\big)\\
&=
\oh\lrb{\underbrace{\delta_{\mu\lambda}\delta_{\nu\rho}I_{\mu\lambda}I_{\nu\rho}}_{=(\mathrm{tr}\,I)^2}
-\underbrace{\delta_{\mu\lambda}\delta_{\nu\rho}I_{\nu\lambda}I_{\mu\rho}}_{=\mathrm{tr}(I^2)}}\br
&+ \oh\lrb{-\underbrace{\delta_{\nu\lambda}\delta_{\mu\rho}I_{\mu\lambda}I_{\nu\rho}}_{=\mathrm{tr}(I^2)}
+\underbrace{\delta_{\nu\lambda}\delta_{\mu\rho}I_{\nu\lambda}I_{\mu\rho}}_{=(\mathrm{tr}\,I)^2}}
\\
&=\mathrm{tr}\,I)^2-\mathrm{tr}(I^2).
\end{align}
\end{smalleq}
Now $\mathrm{tr}\,I=I_{\mu\mu}=d-2\hat x_\mu\hat x_\mu=d-2$, and, writing $I=\mathbf{1}-2\hat x\hat x^{T}$,
\begin{smalleq}
    \begin{align}
&I^2=(\mathbf{1}-2\hat x\hat x^{T})^2=\mathbf{1}-4\hat x\hat x^{T}+4\hat x(\hat x^{T}\hat x)\hat x^{T}=\mathbf{1}
\br
&\Rightarrow\quad \mathrm{tr}(I^2)=d.
\end{align}
\end{smalleq}
Therefore
\[
S(d)=(d-2)^2-d=(d-1)(d-4),
\]
and in particular $S(4)=0$, i.e. the leading conformal tensor structure in \eqref{Seq} is annihilated by the
$\Pi_{\mu\nu\lambda\rho}$ projection specifically in four dimensions.

The remaining term in OPE is the gluon condensate contribution. 
\begin{remark}
This projection removes the \emph{universal leading} short-distance conformal tensor structure
$\propto |x_{12}|^{-4}$ in~\eqref{Seq}. Subleading perturbative contributions and higher-dimensional
operators are not claimed to vanish; they are less singular as $|x_{12}|\to 0$ and are suppressed
by additional powers of the loop size in the small-loop expansion.
\end{remark}
\begin{smalleq}
    \begin{align}
   & \Pi_{\mu\nu\lambda\rho} \mathrm{Tr} F_{\mu\nu}(1) U_{1,2} F_{\lambda\rho}(2) U_{2,1} 
    \to\nonumber\\
    &-\frac{g^2}{2 N_c}\langle (G^a_{\mu\nu})^2 \rangle W_{pert}[0] .
    \end{align}
\end{smalleq}
This constant contribution from the gluon condensate must be balanced by the second area derivative of 
our Hodge-dual factor in the ansatz $W = W_{pert} e^{-\kappa S}$. Using the 
previously found area derivative \eqref{circelArder} for a circle , we find:
\begin{smalleq}
    \begin{align}
       & W_{pert}[C]\frac{\delta^2 \exp{-\kappa S[C]}}{\delta \sigma_{\mu\nu}(0)\delta \sigma_{\lambda\rho}(\pi)} 
        \to \nonumber\\
        &4\kappa^2 W_{pert}[0] \Pi_{\mu\nu\lambda\rho}\eta^{3,\chi}_{\mu\nu}\eta^{3,\chi}_{\lambda\rho} = 16 \kappa^2W_{pert}[0]
    \end{align}
\end{smalleq}
Comparing these terms, we find (in conventional normalizarion)
\begin{smalleq}
    \begin{align}
 \kappa^2 = \frac{\pi^2}{8 N_c}\VEV{\frac{\alpha_s}{\pi}(G^a_{\mu\nu})^2}
 \end{align}
\end{smalleq}
\begin{remark}
    All quantities in the OPE are understood at a renormalization scale $\mu\sim 1/r$.
Accordingly, the Wilson coefficients, $\alpha_s(\mu)$, the condensate
$\langle (G_{\mu\nu}^a)^2\rangle_\mu$, and the inferred parameter $\kappa(\mu)$ are
scheme- and scale-dependent; this is consistent with the intrinsic nonperturbative ambiguity
in the definition of $W_{\mathrm{pert}}[C]$ discussed in previous remarks.
\end{remark}
\begin{remark}
\noindent
In our conventions the area-law exponent is $\kappa S[C]$. Using~\eqref{areaLaw}, for a planar loop
with physical Plateau area $A_{\mathrm{Pl}}[C]$ one has $S[C]=2\sqrt{2}\,A_{\mathrm{Pl}}[C]$,
so the effective string tension multiplying $A_{\mathrm{Pl}}$ is
\[
\sigma \;=\; 2\sqrt{2}\,\kappa\,.
\]
\end{remark}

Our association of $\kappa$ with the gluon condensate supports the physical 
picture of condensed electric flux originally conjectured by Nambu~\cite{Nambu:1974zg}. 
One may view the term $\kappa S[C]$ as the action cost of a field configuration 
with energy density $\langle G^2 \rangle \sim \Lambda_{QCD}^{4}$ spread over 
a ``pancake'' surface of thickness $\Lambda_{QCD}^{-2}$. This estimate interprets 
the dressing factor $\exp{-\kappa S[C]}$ as a statistical probability 
$\exp{-\beta \delta E}$ for a large quark loop propagating through the QCD 
vacuum (the dual Meissner effect).

This computation, combined with asymptotic freedom, established the correct scaling of the string tension $\kappa \propto \Lambda_{QCD}^2$.
\section{Discussion: Comparison with other models of Confinement}

A fundamental distinction must be drawn between the geometric solution presented here and the traditional strong coupling expansion in Lattice Gauge Theory (LGT).

In the standard LGT framework, the area law arises in the strong coupling limit ($\beta \to 0$) through a sum over tessellated surfaces (plaquettes) bounded by the Wilson loop $C$. While this yields confinement at strong coupling, the analytic continuation of this expansion to the continuum limit ($\beta \to \infty$) is obstructed by two well-known pathologies: the \emph{Roughening Transition} and the \emph{Branched Polymer instability}.

\subsection{The Failure of Surface Sums in the Continuum}
As the coupling decreases towards the continuum limit, the surface tension decreases, and the ``tessellated surface'' undergoes a Roughening Transition \cite{Hasenfratz1981, Luscher1981}. Beyond this point, the transverse fluctuations of the surface diverge, and the simple strong coupling expansion breaks down.

More critically, attempts to formulate the Wilson loop as a sum over random surfaces in continuous 4D Euclidean space encounter a fatal geometric instability. The entropy of ``crumpled'' surface configurations overwhelms the Boltzmann weight of the area action. Consequently, the path integral over surfaces is dominated not by smooth 2D sheets, but by \textbf{Branched Polymers}---tree-like structures with Hausdorff dimension $d_H = 4$ \cite{Frohlich1985, Durhuus1984}. This ``crumpled phase'' is unphysical for a string theory of hadrons.

Furthermore, if one attempts to treat these surfaces as fundamental dynamical strings (e.g., the Polyakov string), the theory suffers from the \textbf{Conformal (Liouville) Anomaly}. In $D=4$, the trace anomaly of the stress tensor leads to a strongly coupled Liouville mode, preventing a consistent quantization of the string without introducing unphysical critical dimensions (such as $D=26$).

\subsection{Stability of the Hodge-Dual Solution}
The solution derived in this paper avoids these instabilities entirely because \textbf{we do not perform a sum over fluctuating metrics.}

Instead of integrating over a path space of random surfaces, our approach identifies a unique, stable geometric object: the \textbf{Hodge-dual minimal surface}. This surface is a classical saddle point of the effective action in loop space, determined rigidly by the boundary loop $C$ and the specific geometry of the Douglas-Gram matrix.
\begin{itemize}
    \item \textbf{No Roughening:} Because the surface is a fixed minimal solution (a ``hologram'' of the loop) rather than a fluctuating statistical ensemble, it does not undergo a roughening transition.
    \item \textbf{No Branched Polymers:} We do not sum over ``crumpled'' configurations; we select the analytic minimum.
    \item \textbf{No Liouville Anomaly:} There are no intrinsic string degrees of freedom (2D quantum gravity) to quantize. The surface is an auxiliary geometric construct derived from the gauge field dynamics, not a fundamental quantum string.
\end{itemize}

The construction requires the summation of the areas of the two Hodge-dual surfaces, $S$ and $\tilde{S}$. This is dictated by \textbf{Parity Conservation}: since the Hodge star operation involves the epsilon tensor, it transforms as a pseudoscalar. Including both surfaces symmetrically ensures the resulting Wilson loop functional is a true scalar, preserving the parity invariance of the strong interaction.

\subsection{Distinct from Nambu-Goto String Models}
It is crucial to emphasize that this solution does not merely reproduce the standard string model of confinement. The classic Nambu-Goto string spectrum predicts a tower of ``daughter trajectories'' arising from the vibrational modes of the flux tube.

In contrast, our solution yields a single linear Regge trajectory. The confining factor $\exp{-\kappa S_{min}}$ corresponds to the ground-state static potential $V_0(R)$ between quarks.

We do not deny the existence of excited gluonic levels (often termed ``hybrid potentials'' or ``excited flux tubes'' in lattice QCD \cite{Juge2003}). However, within our framework, these states are not vibrations of the confining minimal surface itself. The minimal surface is a rigid geometric object fixed by the loop $C$. Excited states correspond to specific operator insertions or gluonic excitations \emph{on top} of this vacuum background, which would be described by higher-order correlation functions rather than the vacuum expectation value of a single Wilson loop.

Thus, the ``missing'' daughter trajectories in our result are a feature, not a bug: we separate the universal confining force (the minimal surface) from the specific gluonic excitations, avoiding the unphysical tachyon/Liouville modes inherent to fundamental string theories in $D=4$.

\subsection{Decoupling of Confinement and Correlations}
This factorization property provides a crucial insight into the physical nature of the mass gap. Consider the connected correlator for two loops $C_1$ and $C_2$ separated by a large distance $R$:
\begin{smalleq}
\begin{align}
    \mathcal{W}_{conn}(C_1, C_2) = \mathcal{W}(C_1 \cup C_2) - \mathcal{W}(C_1)\mathcal{W}(C_2)
\end{align}
\end{smalleq}
In our framework, the minimal area functional is additive for widely separated loops, $S_{min}[C_1 \cup C_2] = S_{min}[C_1] + S_{min}[C_2]$. Consequently, the non-perturbative confining factor factors out of the correlation completely:
\begin{smalleq}
\begin{align}
    &\mathcal{W}_{conn}(C_1, C_2) = \exp{-\kappa (S[C_1] + S[C_2])} \br
    &\left[ W_{pert}(C_1, C_2) - W_{pert}(C_1)W_{pert}(C_2) \right]
\end{align}
\end{smalleq}
This result implies that the exponential decay of correlations at large distance (the glueball mass gap) is \textbf{not} driven by the confining minimal surface. The surface simply provides the background normalization for the quark sources (the confinement mechanism).

Instead, the inter-loop correlation $\exp{-M_{gl}R}$ arises entirely from the term in brackets---the connected part of the perturbative gluon diagrams (``fluctuation factors''). Thus, our solution establishes a clean separation of mechanisms:
\begin{enumerate}
    \item \textbf{Confinement} is a geometric phenomenon governed by the Hodge-dual minimal surface (the area law).
    \item \textbf{The Mass Gap} (glueballs) and long-range correlations are dynamic phenomena governed by gluon exchange fluctuations (the perturbative pre-factors).
\end{enumerate}
This distinction further explains why our minimal surface remains stable and static: it does not need to vibrate to transmit forces between singlets; those forces are carried by the gluon field itself. The problem of the mass gap in this correlation function is not solved by our confining factor: it requires further nonperturbative insights.

\subsection{Our 12D Surface vs. AdS/CFT and Holographic QCD}

Our solution shares the fundamental philosophy of the Holographic Principle: the physical observables (Wilson loops) in 4D spacetime are determined by the geometry of a minimal surface in a higher-dimensional "bulk" space. However, the nature of this target space and the mechanism of holography differ fundamentally from the standard AdS/CFT correspondence \cite{Maldacena1998} or Holographic QCD models \cite{Witten:1998zw, Sakai:2004cn}.

\begin{enumerate}
    \item \textbf{The Target Space:} In AdS/CFT and models like Sakai-Sugimoto \cite{Sakai:2004cn}, the
    holographic dual is a curved 5-dimensional anti-de Sitter space (plus a
    compact manifold), representing a dynamical gravitational theory. In our
    framework, the "bulk" space is a flat 12-dimensional space, $\mathbb{R}^{4}
    \otimes\mathfrak{su}(2)$, formed by the tensor product of Euclidean spacetime
    and the canonical $\mathfrak{su}(2)$ subalgebra associated with the
    self-dual sector. Because these extra dimensions represent an 'unfolded'
    internal color space rather than physical spacetime, there are no propagating
    bulk gravitons. Consequently, pure Yang-Mills strictly avoids backreaction
    on this flat target space.
    \item \textbf{Origin of the Geometry:} We do not postulate a dual gravitational theory. The 12-dimensional geometry arises directly from the non-Abelian structure of the gauge field itself. The minimal surface $X_\mu^a(\xi)$ is not a string moving in a gravitational background, but a geometric representation of the gauge field components "unfolded" into their internal color dimensions.
    \item \textbf{Conformal Invariance vs. Confinement:} Standard AdS/CFT begins
    with a conformal field theory ($\mathcal{N}=4$ SYM). To describe QCD, one must
    break this symmetry "by hand" via deformations of the metric, such as
    introducing a hard wall, a soft wall, or compactifying D4-branes on a thermal
    circle \cite{Witten:1998zw}. Our approach is the inverse: we begin with the
    QCD loop equation and observe a confining factor which is left undefined by
    this loop equation, but physically corresponds to the unknown QCD vacuum
    (with some gluon condensate in it). The Area Law then follows as a
    mathematical zero mode of the loop diffusion operator, corresponding to the
    physical QCD vacuum. The theory is inherently confining and non-conformal
    from the outset, with the mass scale set naturally by the string tension
    $\sigma$ (derived from the gluon condensate in Section 6), without ad hoc
    geometric deformations.
\end{enumerate}

Thus, our solution can be viewed as "Gauge Holography" rather than "Gravity Holography." The minimal surface is a hologram of the loop, projected not into a fictitious gravitational dimension, but into the internal color space of the QCD vacuum.

\subsection{Relation to Scattering Amplitude Duality}
Finally, we must address the well-known duality between Wilson loops and scattering amplitudes observed in $\mathcal{N}=4$ Super Yang-Mills theory \cite{Alday:2007hr, Drummond:2007aua}. In that context, the vacuum expectation value of a polygonal Wilson loop equals the MHV scattering amplitude (squared).

It is crucial to recognize that this duality relies entirely on the \textbf{dual conformal symmetry} of the planar $\mathcal{N}=4$ theory. In a conformal theory, the area law is strictly absent (or divergent), and the Wilson loop is dominated by cusp anomalies and perimeter terms.

In contrast, pure Planar QCD is \textbf{never conformal}. The running coupling constant violates conformal symmetry, leading to dimensional transmutation and the generation of a mass gap (as derived in Section 6). Consequently, the Wilson loop in pure QCD is dominated by the Area Law $\exp{-\sigma \text{Area}}$, which has no counterpart in the conformal scattering amplitudes of $\mathcal{N}=4$ SYM.

Furthermore, the existing literature on the Wilson loop/amplitude duality typically relies on the AdS/CFT conjecture without solving the underlying equations of motion for the gauge field. In this work, we have solved the \textbf{MM loop equations} \cite{MM1981NPB, Mig83} directly. Since these are the defining non-perturbative equations of the theory, our solution captures the specific dynamics of asymptotically free QCD, which are physically distinct from the kinematics of conformal Super Yang-Mills.

\section{Conclusions}

We have established the geometric foundation for the string dual of Planar QCD. The confinement mechanism relies fundamentally on the self-duality of the area derivative---a geometric property that exists exclusively in four dimensions. This stands in sharp contrast to \emph{fundamental} bosonic string models, which require unphysical critical dimensions (e.g., $D=26$) to maintain quantum consistency. While effective string theories can describe long flux tubes in $D=4$, our result shows that the fundamental confining mechanism is a stable, non-vibrating minimal surface, exempt from the quantization anomalies of the fundamental Nambu-Goto string.

Our results demonstrate that the confining surface is stable specifically in $D=4$. This dimensionality constraint explains the absence of daughter trajectories in the resulting spectrum: the confining object is a minimal surface fixed by 4D geometry, not a fundamental string vibrating in higher embedding dimensions. The ``missing'' string modes are eliminated, leaving a clean linear Regge trajectory consistent with phenomenological observations.

In the companion paper \cite{Migdal2026GeometricQCDII}, we populate this rigid surface with internal Majorana fermions. We show that this "Fermi String" theory reproduces the planar graphs of QCD and leads to a finite effective action in Momentum Loop Space, allowing for the computation of the meson spectrum.

\section*{Acknowledgements}
The author is grateful to Nima Arkani-Hamed for the invitation to present this work at the IAS Particle Theory "Pizza" Seminar, which provided a stimulating environment for discussing these results. I also thank Edward Witten for valuable discussions regarding the distinction between this confining geometry and the conventional area law.

\bibliographystyle{elsarticle-num}
\bibliography{bibliography}
\end{document}